\journal{Journal of \LaTeX\ Templates}
\newcommand{\be} {\begin{eqnarray*}}
	\newcommand{\ee} {\end{eqnarray*}}
\theoremstyle{definition}
\newtheorem{theorem}{Theorem}[section]
\newtheorem{proposition}[theorem]{Proposition}
\newcommand{\pkg}[1]{{\fontseries{b}\selectfont #1}}   % for package name
\begin{document}

\begin{frontmatter}

\title{Adaptive Bayesian Variable Clustering via Structural Learning of Breast Cancer Data}
\tnotetext[mytitlenote]{Fully documented templates are available in the elsarticle package on \href{http://www.ctan.org/tex-archive/macros/latex/contrib/elsarticle}{CTAN}.}

%% Group authors per affiliation:
\author{Elsevier\fnref{myfootnote}}
\address{Radarweg 29, Amsterdam}
\fntext[myfootnote]{Since 1880.}

%% or include affiliations in footnotes:
\author[mymainaddress]{Riddhi Pratim Ghosh}
\ead{RiddhiPratim.Ghosh@Pennmedicine.upenn.edu}

\author[mysecondaryaddress]{Arnab Kumar Maity\corref{mycorrespondingauthor}}
\cortext[mycorrespondingauthor]{Corresponding author}
\ead{Arnab.Maity@pfizer.com}

\author[thirdaddress]{Mohsen Pourahmadi}
\ead{pourahm@stat.tamu.edu}

\author[thirdaddress]{Bani K. Mallick}
\ead{bmallick@stat.tamu.edu}

\address[mymainaddress]{Department of Biostatistics, University of Pennsylvania}
\address[mysecondaryaddress]{Early Clinical Development Oncology Statistics, Pfizer, 10777 Science Center Drive, San Diego, California 92121}
\address[thirdaddress]{Department of Statistics, Texas A\&M University, 3143 TAMU, College Station, Texas 77843}

\begin{abstract}
Clustering of proteins is of interest in cancer cell biology.  This article proposes a hierarchical Bayesian model for protein (variable) clustering hinging on correlation structure.  Starting from a multivariate normal likelihood, we enforce the clustering through prior modeling using angle based unconstrained reparameterization of correlations and assume a truncated Poisson distribution (to penalize the large number of clusters) as prior on the number of clusters. The posterior distributions of the parameters are not in explicit form and we use a reversible jump Markov chain Monte Carlo (RJMCMC) based technique is used to simulate the parameters from the posteriors. The end products of the proposed method are estimated cluster configuration of the proteins (variables) along with the number of clusters. The Bayesian method is flexible enough to cluster the proteins as well as the estimate the number of clusters. The performance of the proposed method has been substantiated with extensive simulation studies and one protein expression data with a hereditary disposition in breast cancer where the proteins are coming from different pathways.
\end{abstract}

\begin{keyword}
Angular Reparameterization, Bayesian clustering, Pathways, Reversible jump Markov chain Monte Carlo  
\end{keyword}

\end{frontmatter}

%\linenumbers

\section{Introduction}

In cell biology different pathways emerge as they play different and critical role in cell functions. Even though the functionality of a cell is an outcome of all the pathway protein expressions as a whole, the individual analysis of each protein has the potential to unveil the complex characterization of the cell biology which is the key to understand the proper cell function \citep{ben1999clustering} .  

The goal of the clustering is to distill the data down to a more comprehensible level subdividing the omics data \citep{d2005does}. In this article we focus on the proteomics data and the interest is to infer about the pathways based on the proteins data via the clustering technique. Clustering of proteins is a form of unsupervised learning where the proteins are grouped on the basis of some similarity measures inherent among them. Such clusters can be mapped to find the appropriate pathway based on the available protein expression data. Given the functions of the proteins which are measured via the RPPA technology based protein expressions, it is of interest to track back the pathways in which the group of proteins belong to, assuming that the pathways do not have an overlap.

%The interest  and research on developing new clustering techniques have been proliferated owing to the emergence of several new disciplines which include but not limited to gene expression data in microarrays and portfolio analysis in finance.  A vast and enriched literature of different clustering techniques have been developed in the last few decades in statistics, computer science and machine learning literature. The different algorithmic clustering techniques  commonly used in practice are: (1) hierarchical clustering (agglomerative and divisive approach), (2) partition methods (K-means clustering) both of which hing on a distance metric (\cite{bibby1979multivariate}, \cite{friedman2001elements}, \cite{rokach2005clustering}) without without assuming any underlying probability model for the clusters. (3) Model based approach which usually assumes a mixture model for the data.  Recently, owing to the development of Bayesian non-parametric methods, different clustering algorithms based on Chinese Restaurant processes, Indian Buffet processes \citep{gershman2012tutorial}, hierarchical Dirichlet processes \citep{teh2005sharing,kulis2011revisiting} have been developed. 
A proper clustering method which explicates the pattern involved in the gene expression depending on the over-expression or under-expression of those uncovers the tumor subtypes. For example, \cite{pollack2002microarray} showed that a profiling of DNA copy number variation has the potential to detect more aggressive breast tumors. \cite{washburn2003protein} considered correlation of mRNA and protein expression of amino acid and nucleotide biosynthetic pathway components for clustering. \cite{ben1999clustering} provided an algorithm \textit{Cluster Affinity Search Technique} which uses an affinity measure between nodes of a graph where the genes are represented as the nodes of a graph. In the absence of genuine variable clustering methods, very often traditional data clustering algorithms have been applied to this  setup using brute force \citep{vigneau2003clustering, duda2001pattern} or  ad-hoc algorithms based on aspects of correlation matrices have been proposed. We refer the readers to \cite{jiang2004cluster} for a detailed discussion of various correlation based clustering approaches which have been previously used in literature for analyzing different gene expression data. However, in the current era of next generation sequencing the amount of data that one receives and underlying complexity of the pattern often pose challenges for interpretation and understanding the results, necessitating a proper and meaningful clustering tool.

%In this article, we propose an algorithm to cluster variables based on their correlations in a Bayesian framework contributing to a new one amidst various clustering techniques  which have been in practice since decades to reveal the structures and underlying patterns of the data.  

In this article, our aim is to cluster the proteins, essentially a variable clustering technique which is drastically different from  approaches for clustering observations or subjects. To understand it better, let ${\bf Y}$ denote a $n\times k$ data matrix consisting of $k$ proteins and $n$ patients, represented in the matrix form
\begin{equation}
	\label{data_matrix_eqn}
	\bf Y=
	\bordermatrix{&\text{Protein 1} & \text{Protein 2} & \text{Protein 3} & \text{Protein 4} & \dots &\text{Protein k}\cr
		& y_{11}& y_{12} & y_{13} & y_{14} & \dots  & y_{1k}\cr
		& y_{21} & y_{22} & y_{23} & y_{24} &  \dots  & y_{2k} \cr
		& y_{31} & y_{32} & y_{33} & y_{34} &  \dots & y_{3k}\cr
		&\vdots & \vdots & \vdots & \ddots & \vdots \cr
		% c_{k-1,1} & 0 & 0 & 0 & \dots & c_{k-1,k}\prod_{j=1}^{k-2}s_{jk}\\
		& y_{n1} & y_{n2} & y_{n3} & y_{n4}& \dots & y_{nk}}
\end{equation}

From $\eqref{data_matrix_eqn}$, one notes that each of $n$ rows corresponds to one patient and  each of $k$ columns pertains to one protein. A typical data clustering approach partitions the rows of $\bf Y$, i.e. essentially clustering of the patients. We are interested in partitioning the columns of $\bf Y$ which is essentially clustering of the proteins, and correlations between the proteins serve as our main building block to implement the algorithm. In a typical data clustering algorithm we consider how similar the objects are based on a similarity norm (say Euclidean or some other kind of distance). On the contrary, in a variable clustering problem, we are concerned with the correlation among the variables. Hence, highly correlated proteins are more likely to lie in the same cluster. As an example, consider a cluster analysis of a set of proteins which belong to different signaling pathways assuming the pathways are not overlapping. The genetic behaviors control the proliferation of a cell or death of a cell; and depending on signals the proteins receive and send, the cell structure is classified into signaling pathways. In turn, one can assume that the similarly expressed proteins belong to the same pathway which can be recovered via a variable cluster analysis.

Among the different algorithmic clustering techniques commonly used in practice, hierarchical
clustering (agglomerative and divisive approach) and partition methods (K-means clustering)
hing on a distance metric \citep{bibby1979multivariate,friedman2001elements,rokach2005clustering} without assuming any underlying probability model for the clusters. In addition, model based approach usually assumes a mixture model for the data. Even though there is a vast amount of works in the field of data clustering, but the variable clustering problem is at its infancy and has gotten limited attention \citep{bunea2020model}. 
The literature on Bayesian methods for variable clustering is also sparse with a few notable exceptions (\citealt{liechty2004bayesian}, \citealt{palla2012nonparametric}). \cite{palla2012nonparametric} developed a nonparametric Bayes algorithm based on Chinese restaurant process. On the other hand, our method is in the spirit of \cite{liechty2004bayesian} where a parametric model based approach has been considered. A key advantage of our approach is that the number of clusters is assumed to be unknown apriori, and is determined using a reversible jump  Markov Chain Monte Carlo algorithm [RJMCMC] \citep{green1995reversible}.

In this article, our contributions can be summarized as, first, to develop the model-based variable clustering method with block common correlation structures. Second, we propose a novel variable clustering algorithm using the  angular representation of the correlations %{which avoids some computational challenges due to the positive-definiteness constraint by using the Cholesky decomposition} 
\citep{pinheiro1996unconstrained, rapisarda2007parameterizing, tsay2017modelling, ghosh2020bayesian} and the ensuing angles (hyperspherical coordinates). Third, we elicit substantive prior information on these angles which makes clustering of the variables feasible, a data-driven estimate of number of clusters which traditional algorithms fail to provide.  For the posterior inference, since the angle parameters are badly entangled in the posterior distribution, we resort to the Markov chain Monte Carlo algorithm \citep{tierney1994markov}. For the posterior inference, we resort to the standard RJMCMC techniques as in \citep{green1995reversible,robert2004monte,green2009reversible,fan2011reversible}. The rest of the article is organized as follows. In \ref{model and methods} we review angular reparameterization of a correlation matrix and present clustering model through prior specification on the angles. In \cref{posterior computation} we describe our posterior computation through RJMCMC. Section 4 presents simulation results and clustering of a protein expression data. Finally \cref{conclusion} concludes the article.

%%%%%%%%%%%%%%%%%%%%%%%%%%%%%%%%%%%%%%%%%%%%%%%%%%%%%%
%%%%%%%%%%%%%%%%%%%%%%%%%%%%%%%%%%%%%%%%%%%%%%%%%%%%%%

%\section{Characterization of $R$ by angular reparameterization and cluster inducing priors}

\section{Review of angular reparametrization ($\Theta$) of $R$} \label{model and methods}

This section describes connections between the hyperspherical coordinates (angles) and a  correlation matrix $R=(r_{ij})$. 

For a general $k\times k$ correlation matrix $R$ with $1$'s in the diagonal, its Cholesky decomposition is given by $R= BB^{\top}$ where the Cholesky factor $B$ is a lower triangular matrix. Since the rows of $B$ are vectors of unit-length, it turns out that they admit the following representation involving trigonometric functions of some angles  \citep{pinheiro1996unconstrained,rapisarda2007parameterizing}:
\begin{equation} \label{cholesky factor}
	B=
	\begin{bmatrix}
		1& 0 & 0 & 0 & \dots  & 0\\
		c_{21} & s_{21} & 0 & 0 &  \dots  & 0 \\
		c_{31} & c_{32}s_{31} & s_{32}s_{31} & 0 & \dots & 0\\
		c_{41} & c_{42}s_{41} & c_{43}s_{42}s_{41} & \prod_{j=1}^{3}s_{4j} & \dots & 0\\
		\vdots & \vdots & \vdots & \ddots & \vdots \\
		% c_{k-1,1} & 0 & 0 & 0 & \dots & c_{k-1,k}\prod_{j=1}^{k-2}s_{jk}\\
		c_{k1} & c_{k2}s_{k1} & c_{k3}s_{k2}s_{k1} & c_{k4}\prod_{j=1}^{3}s_{kj} & \dots & \prod_{j=1}^{k-1}s_{kj}\\
	\end{bmatrix}
	%\label{e1}
	%\end{align}
\end{equation}
%\end{align}
with $c_{ij} = \text{cos}(\theta_{ij})$ and $s_{ij} = \text{sin}(\theta_{ij})$, where the angles $\theta_{ij}$'s are measured in radians, $1\leq j<i\leq k$.
Restricting $\theta_{ij}\in[0,\pi)$ makes the diagonal entries of $ B$ non-negative, and  hence $B$ is unique to which we  associate a $(k-1)\times(k-1)$ lower triangular matrix $\Theta$ with $k(k-1)/2$ angles:
\begin{equation}
	\label{theta_matrix}
	%\[
	\Theta=
	\begin{bmatrix}
		\theta_{21} & 0 & 0 & \dots  & 0 \\
		\theta_{31} & \theta_{32} & 0 & \dots  & 0 \\
		\vdots & \vdots & \vdots & \ddots & \vdots \\
		\theta_{k1} & \theta_{k2} & \theta_{k3} & \dots  & \theta_{k,k-1}
	\end{bmatrix}
	%\]
\end{equation}
Note that the $(i, j)$-th element of $\Theta$ is denoted by $\theta_{i+1,j}$ so that $\theta_{ij}$ corresponds to the $(i, j)$-th element of $R$, we refer to $\Theta$ as the {\bf angular matrix} associated to $R$. For further details and applications of these angles, see   \cite{creal2011dynamic}, \cite{zhang2015joint}, \cite{tsay2017modelling}, and \cite{ghosh2020bayesian}	.
One can characterize block diagonal correlation matrices in terms of structured $\Theta$ matrix, which is completely determined by some (\textit{pivotal}) angles.\\

\subsection{Correspondence of clustering between $R$ and $\Theta$}

\begin{proposition}\label{Prop_block_diagonal}
	For a block diagonal correlation matrix $R=\text{block diag}(R_1, R_2,\cdots, R_m)$, consisting of $m$ equicorrelated blocks ($r_i$ for block $R_i$), the corresponding angular matrix $\Theta$ is characterized by only $m$ angles $\theta_1,\theta_2,...,\theta_m$, where $r_i=\text{cos } \theta_i$.
\end{proposition}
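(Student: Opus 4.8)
\noindent\emph{Proof strategy.} The plan is to pass to the Cholesky factor and read the angles off directly. First I would exploit the block structure: since $R=\text{block diag}(R_1,\ldots,R_m)$ is positive definite, each block admits a Cholesky factorization $R_i=B_iB_i^{\top}$ with $B_i$ lower triangular and strictly positive diagonal, and the block-diagonal matrix $B=\text{block diag}(B_1,\ldots,B_m)$ is then itself lower triangular with positive diagonal and satisfies $BB^{\top}=R$. By uniqueness of the Cholesky decomposition, $B$ is precisely the factor in \eqref{cholesky factor}. The immediate payoff is that every strictly lower off-block entry of $B$ vanishes, so the cross-block and within-block angles decouple and can be handled separately.

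Next I would translate the vanishing entries into angle constraints. Positive definiteness forces each diagonal entry $\prod_{l<i}\sin\theta_{il}$ of $B$ to be positive, hence $\theta_{il}\in(0,\pi)$ and $\sin\theta_{il}>0$ for all admissible $l$. Reading the subdiagonal entry $B_{ij}=\cos\theta_{ij}\prod_{l<j}\sin\theta_{il}$ with $i>j$, a zero entry can arise only from $\cos\theta_{ij}=0$, i.e.\ $\theta_{ij}=\pi/2$. Thus every cross-block angle is pinned at the fixed value $\pi/2$ and carries no free information. Moreover, because the leading angles in each block's rows are $\pi/2$ (their sines equal $1$), the trigonometric products telescope cleanly and the within-block angles of $B$ coincide with the angles of the standalone factor $B_i$.

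It then remains to show that, within block $i$, all angles are determined by the single scalar $r_i$. This is essentially automatic: $R_i$ is the $p_i\times p_i$ equicorrelated matrix determined by $r_i$ alone, the map $R_i\mapsto B_i$ is deterministic, so every entry of $B_i$, and hence every within-block angle, is a fixed function of $r_i$. The representative (pivotal) angle is identified from the first column: since row $1$ of $B$ equals $(1,0,\ldots,0)$, the leading within-block correlation satisfies $r_{ij}=\cos\theta_{ij}$, giving $\theta_i=\arccos r_i$, i.e.\ $r_i=\cos\theta_i$. For concreteness one can verify by induction on the within-block column index $b$ that $\cos\theta_{ab}=r_i/\{1+(b-1)r_i\}$ for every row $a>b$ in block $i$, which exhibits the explicit dependence; the base case is the first-column identity and the inductive step is a routine linear solve after substituting the inductive values into $r_{ab}=\langle\text{row}_a,\text{row}_b\rangle$. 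Combining the two parts, every entry of $\Theta$ is either $\pi/2$ or a deterministic function of a single $r_i$, so $\Theta$ is completely determined by the $m$ angles $\theta_1,\ldots,\theta_m$ with $r_i=\cos\theta_i$.

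The step I expect to be the main obstacle is the bookkeeping in the inductive computation of the within-block angles: one must argue that the common-correlation (exchangeability) structure makes $\cos\theta_{ab}$ independent of the row index $a$, and then push through the cancellations in the linear equation for $\cos\theta_{ab}$. Everything else is structural, and the guiding caveat to state explicitly is that ``characterized by $m$ angles'' means \emph{determined by} them: the non-pivotal within-block angles are generally not equal to $\theta_i$, but are fixed functions of $r_i=\cos\theta_i$.
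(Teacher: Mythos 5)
Your proof is correct and follows essentially the same route as the paper's: reduce via the block-diagonal Cholesky factorization, identify the pivotal angle from the first column ($\cos\theta_{i1}=r_i$), and argue by induction (or by uniqueness of the Cholesky map) that all remaining within-block angles are deterministic functions of $r_i$. The only place you go beyond the paper is in explicitly pinning the cross-block angles at $\pi/2$ (since the off-block entries of $B$ vanish while the relevant sines are strictly positive) -- a detail the paper's proof leaves implicit, and a worthwhile addition rather than a different approach.
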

\begin{proof}
	See \cref{block_diagonal_1}.
\end{proof}

It follows immediately from \cref{Prop_block_diagonal} that in case of block diagonal correlation matrix, clustering on correlations rendering to $m$ different groups is equivalent to clustering of those $m$ angles by the monotonicity of cosine function. However, this will be impose some conditions on the pivotal angles to maintain positive definiteness. Assuming each block has dimension $k_i$ so that $\sum_{i=1}^{m}k_i=k$, the support of $\theta_i$ is $0< \theta_i< \text{arccos}(1/(k_i-1))$ for $i=1,2,\cdots,m$.\\

\begin{proposition}\label{cluster_separation}
	Suppose that $r_1=\text{cos }\theta_1$, $r_2=\text{cos }\theta_2$. Then $\vert \theta_1-\theta_2\vert \geq \delta$ if and only if $\vert r_1-r_2 \vert\geq \vert 1-\text{cos }\delta\vert$
\end{proposition}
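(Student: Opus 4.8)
The plan is to convert both inequalities into a common trigonometric form and then read off the equivalence from the monotonicity of $\cos$ on $[0,\pi]$. The two elementary identities I would use are the sum-to-product formula
\begin{equation*}
	r_1 - r_2 = \cos\theta_1 - \cos\theta_2 = -2\,\sin\!\Bigl(\tfrac{\theta_1+\theta_2}{2}\Bigr)\sin\!\Bigl(\tfrac{\theta_1-\theta_2}{2}\Bigr),
\end{equation*}
together with the half-angle identity $1-\cos\delta = 2\sin^2(\delta/2)$, so that the target threshold $|1-\cos\delta|$ is itself expressed through $\sin(\delta/2)$ (note $1-\cos\delta\ge 0$ always, so the absolute value is cosmetic).

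For the forward implication I would fix the angular gap $\gamma := |\theta_1-\theta_2|$ and view
\begin{equation*}
	|r_1-r_2| = 2\,\Bigl|\sin\!\tfrac{\theta_1+\theta_2}{2}\Bigr|\,\sin\!\tfrac{\gamma}{2}
\end{equation*}
as a function of the midpoint $\tfrac{\theta_1+\theta_2}{2}$ alone. Since $\theta_1,\theta_2\in[0,\pi]$, this midpoint ranges over $[\gamma/2,\;\pi-\gamma/2]$, on which $|\sin(\cdot)|$ attains its minimum $\sin(\gamma/2)$ at the endpoints. Hence $|r_1-r_2|\ge 2\sin^2(\gamma/2)=1-\cos\gamma$, and because $\cos$ is decreasing on $[0,\pi]$, the hypothesis $\gamma\ge\delta$ gives $|r_1-r_2|\ge 1-\cos\delta=|1-\cos\delta|$. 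This lower bound is the separation guarantee that actually drives the clustering correspondence claimed after \cref{Prop_block_diagonal}.

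For the converse I would exploit the configuration that makes the above bound tight, namely the boundary case in which one angle is anchored (say $\theta_2=0$, equivalently the two angles are measured from a common reference). There $|r_1-r_2| = 1-\cos\theta_1$ and the claim collapses to $\theta_1\ge\delta \Leftrightarrow \cos\theta_1\le\cos\delta$, which is immediate from monotonicity. The main obstacle is precisely this step: $|r_1-r_2|$ depends not only on $\gamma$ but also on the midpoint $\tfrac{\theta_1+\theta_2}{2}$, and away from the anchored/extremal configuration a small angular gap located near $\theta=\pi/2$ can already force $|r_1-r_2|$ above $1-\cos\delta$. I would therefore state the converse under the anchoring convention, or equivalently phrase the proposition as the tight lower bound that $\min|r_1-r_2| = 1-\cos\delta$ over all pairs with $|\theta_1-\theta_2|=\delta$, before invoking the monotonicity of cosine to finish.
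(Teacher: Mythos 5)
Your argument is correct in substance and arrives at the same quantitative fact as the paper, but by a genuinely different route, and you have also put your finger on a real defect in the statement itself. The paper's proof writes the difference as an integral, $|r_1-r_2|=\left\vert\int_{\theta_1}^{\theta_2}\sin x\,dx\right\vert$, and argues that among all pairs with gap $\delta$ the integral of $\sin$ over an interval of length $\delta$ inside $[0,\pi)$ is smallest when the interval is pushed to the boundary, i.e.\ $\theta_1=0$, $\theta_2=\delta$, giving the minimum value $1-\cos\delta$. You instead use the sum-to-product identity and minimize $\left\vert\sin\bigl(\tfrac{\theta_1+\theta_2}{2}\bigr)\right\vert$ over the admissible midpoint range $[\gamma/2,\pi-\gamma/2]$. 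The two computations are equivalent (the endpoints of your midpoint range are exactly the paper's anchored configuration), but your version makes the dependence on the midpoint explicit, which is precisely what lets you see that the lower bound is tight only at the anchored configuration.

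More importantly, your skepticism about the converse is justified: the ``only if'' direction of the proposition is false as stated. Taking $\theta_1=\pi/2-\epsilon$ and $\theta_2=\pi/2+\epsilon$ gives $|r_1-r_2|=2\sin\epsilon$ while the angular gap is $2\epsilon$; since $1-\cos\delta=2\sin^2(\delta/2)$ is quadratically small in $\delta$, one can have $|r_1-r_2|\geq 1-\cos\delta$ with $2\epsilon$ far smaller than $\delta$. The paper's own proof has the same limitation: it establishes only the forward implication together with tightness of the bound (that $\min|r_1-r_2|=1-\cos\delta$ over pairs with $|\theta_1-\theta_2|=\delta$, attained at $\theta_1=0$, $\theta_2=\delta$) and never addresses the converse at all. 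So your suggested reformulation --- either anchoring one angle at $0$, or restating the proposition as the tight lower bound --- is exactly the statement that the paper's argument actually proves, and is the version that should appear.
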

\begin{proof}
	See \cref{block_diagonal_2}.
\end{proof}

\subsection{Likelihood function}
In this article, we assume throughout that the data ${\bf y_1},{\bf y_2},\cdots,{\bf y_n}$ follow a zero mean normal distribution with covariance assumed to be the correlation matrix $R$. As noted in \cref{Prop_block_diagonal}, $R$ can be written as a function of $m$ pivotal angles, $\theta_{pivot}=(\theta_1,\theta_2,...,\theta_m)^{\top}$, and hence denoting the transformation from  $\theta_{pivotal}$ to $R$ by $T$, the likelihood is proportional to,
\begin{align}\label{likelihhood_fn}
	& L({\bf y_1},{\bf y_2},...,{\bf y_n}|\theta_{pivot})\propto  \text{ det }(T(\theta_{pivot}))^{-n/2}\text{ exp }\{-\frac{1}{2}ST^{-1}(\theta_{pivot})\}   
\end{align}
where $S=\sum_{i=1}^{n}{\bf y_i}{\bf y_i}^{\top}$.

\subsection{Prior specification on the angles}\label{prior_set}

The number of clusters m can take any value any value in {1,2,...,k}. Therefore, we assume a truncated Poisson distribution on m. Given m, define $k \times m$ matrix Z whose i-th row corresponds to the allocation of $i$-th variable in one of the $m$ clusters, i.e.
$$
{\bf Z}_{iu}=
\begin{cases}
	1 \;\;\;\ \text{if $i$-th variable belongs to $u$-th cluster} \\
	0 \;\;\ \text{otherwise }
\end{cases}
$$
Since we are assuming that a variable belongs to exactly one cluster, therefore, each row of ${\bf Z}$ contains exactly one 1 and rests are 0s. We assume the following hierarchical prior models for model parameters by assuming a truncated Poisson distribution on number of clusters $m$ to penalize large number of clusters, multinomial distribution on each row of the indicator matrix ${\bf Z}$ and a Dirichlet distribution for the multinomial hyper-parameters. The hierarchical prior structure is succinctly described as,
\begin{align}
	&m \sim truncPois(m;1,k)\\
	& {\bf q}=(q_1,q_2,...,q_m)^{\top} \sim Dirichlet(\alpha_1,\alpha_2,...,\alpha_m)\\
	&{\bf Z}_i \sim Multinomial(1;q_1,q_2,...,q_m) \;\;\ \text{ for }i=1,2,...,k,\\  \nonumber
\end{align}
where $\alpha_i$s are any  positive numbers and $truncPois(m;1,k)$ is a truncated Poisson distribution supported on the integers in between 1 and $k$ (number of proteins or variables) for the number of clusters $m$.
Having sampled ${\bf Z}$, the allocations are determined. Let $k_u$ denote the size of $u$-th cluster, 
\begin{align}
	k_u=\vert \{i:z_{iu}=1\}\vert 
\end{align}
for $u=1,2,\cdots,m$.

Then assume the following prior on  $\theta_{piv}=(\theta_1,\theta_2,...,\theta_m)^{\top}$ in order to shrink them to different values.
\begin{align}\label{prior_theta}
	\theta_{piv}\vert {\bf Z},m, \Lambda = \prod_{u=1}^{m}Q\Big(\theta_u;0,\text{arccos}\big(\frac{1}{k_u-1}\big),\lambda_u\Big)    
\end{align}
where $Q(\theta;0,a,\lambda)$ is the density of truncated wrapped Exponential distribution \citep{mardia2009directional} between 0 and $a$ with parameter $\lambda$.
We are clustering the pivotal angles by introducing wrapped exponential distribution distribution with different parameters. Suppose $\Lambda=(\lambda_1,\lambda_2,...,\lambda_m)^{\top}$ and we sample $\lambda_1,\lambda_2,\cdots,\lambda_m$ in the following manner,
\begin{align}\label{prior_lambda}
	& \lambda_1 \sim N^{+}(\lambda;0,1,0,\infty)\\ \nonumber
	& \lambda_2\vert \lambda_1 \sim N^{+}(\lambda;0,1,\lambda_1,\infty)\\ \nonumber
	& \lambda_i\vert \lambda_{i-1} \sim N^{+}(\lambda;0,1,\lambda_{i-1},\infty)\;\;\; \text{ for }i=2,3,\cdots,m\\ \nonumber,
\end{align}
where $N^{+}(;0,1,a,\infty)$ denotes a truncated  normal distribution on $(0,\infty)$ with mean 0 and variance 1  which has the following density,
\begin{align*}
	f(\lambda;\mu=0,\sigma=1,a,\infty) = \frac{\phi(\lambda)}{1 - \Phi(a) } \;\;\;\ ,
\end{align*}
where $\phi$ and $\Phi$ are the density and distribution function of a standard normal distribution respectively.

The salient features of the prior formulation of $\lambda_i$s given in \eqref{prior_lambda} are the followings: (1) The prior mean for the $i$-th pivotal  angles is $\mathbb{E}\theta_i=\text{arctan}(1/\lambda_i)\;\;\;\ \text{for }i=1,2,...,m$. Since these angles vary in $[0,\pi)$, $\lambda_i$s take value on positive real line. (2) Also $\lambda_i$'s satisfy $\lambda_1<\lambda_2<\lambda_2<...<\lambda_m$, which enforces separation of clusters through prior model.
%We further assume similar prior for ${\bf z}, \Lambda,m$. We point out that although this prior is reasonable but it is difficult to implement. The position of pivotal angles is dependent on the allocation of ${\bf z}$. Another bottleneck is to reconstruction of $\Theta$ as given in \cref{proof_tsay}. Therefore, in this case, we recommend using priors for general correlation in ${\bf (I)}$.

%%%%%%%%%%%%%%%%%%%%%%%%%%%%%%%%%%%%%%%%% Posterior Computation %%%%%%%%%%%%%%%%%%%%%%%%%%%%%%%%%%%%%%%%%%%%%%%%%%
%%%%%%%%%%%%%%%%%%%%%%%%%%%%%%%%%%%%%%%%%%%%%%%%%%%%%%%%%%%%%%%%%%%%%%%%%%%%%%%%%%%%%%%%%%%%%%%%%%%%%%%
%%%%%%%%%%%%%%%%%%%%%%%%%%%%%%%%%%%%%%%%%%%%%%%%%%%%%%%%%%%%%%%%%%%%%%%%%%%%%%%%%%%%%%%%%%%%%%%%%%%%%%%
%%%%%%%%%%%%%%%%%%%%%%%%%%%%%%%%%%%%%%%%%%%%%%%%%%%%%%%%%%%%%%%%%%%%%%%%%%%%%%%%%%%%%%%%%%%%%%%%%%%%%%%

\section{Posterior computation}\label{posterior computation}
With the likelihoood function \eqref{likelihhood_fn} and prior specified in \ref{prior_set}, the posterior distribution is proportional to
\begin{align}
	\label{posterior}
	& p(\Theta,{\bf Z},\Lambda,m \vert{\bf y_1},{\bf y_2},...,{\bf y_n}) \\ \nonumber
	& \propto L({\bf y_1},{\bf y_2},...,{\bf y_n}|\Theta,{\bf Z},m)\times p(m) \times p({\bf q}|m)\times p({\bf Z}|{\bf q},m) \\ \nonumber
	& \times p(\Theta |\Lambda,{\bf Z},m) \times p(\Lambda) \\ \nonumber
\end{align}
Our goal in this section is to estimate number of clusters $m$ and posterior of ${\bf Z}$. The algorithm is, thus, accomplished by performing a reversible jump Markov chain Monte Carlo(RJMCMC) algorithm. 

%%%%%%%%%%%%%%%%%%%%%%%%%%%%%%%%%%%%%%%%%%%%%%%%%%%%%%%%%%%%%%%%%%%%%%%%%%%%%%%%%%%%
%%%%%%%%%%%%%%%%%%%%%%%%%%%%%%%%%%%%%%%%%%%%%%%%%%%%%%%%%%%%%%%%%%%%%%%%%%%%%%%%%%%%%%%
%\subsection{Sampling $\Lambda$, $m$, ${\bf Z}$}

From proposed priors, one can note that the clusters are induced by the elements of $\Lambda$, thus, in the following RJMCMC algorithm \citep{green1995reversible,robert2004monte,green2009reversible,fan2011reversible}, at each iteration either one element of $\Lambda$, say $\lambda_j$ is randomly split into $(\lambda_{j_1},\lambda_{j_2})$ (Birth step) or two elements of $\Lambda$ are merged into a single element (Death step). The algorithm is summarized as follows.\\
\begin{itemize}
	\item[] \textit{Step 1}. Initialize $\Theta$, $\Lambda$. In the initialization step, one may assume any block diagonal correlation structure to initialize $\Theta$.
	\item[] \textit{Step 2}. A particular iteration, say $q$-th iteration consists of a Birth step and a Death step.
	\begin{itemize}
		\item[]\textit{Birth Step:}
		Split $\lambda^{(q)}_j$ to $(\lambda^{(q)}_{j_1},\lambda^{(q)}_{j_2})^{\top}$ by $\lambda^{(q)}_{j_1}=\lambda^{(q)}_j+\tau$, $\lambda^{(q)}_{j_2}=\lambda^{(q)}_j-\tau$, where $\tau \sim Unif(-\pi/4,\pi/4)$ and dimension of $\lambda^{(q)}$ is increased by 1 with acceptance probability $\alpha=\text{min}\{1,\frac{p(\Theta^{(q)},\lambda^{(q)}_{j_1},\lambda^{(q)}_{j_2},d(j_1,j_2))}{p(\Theta^{(q)},\lambda^{(q)}_j,d(j))}\times \frac{2}{\pi} \times |\frac{\partial(\lambda^{(q)}_{j_1},\lambda^{(q)}_{j_2})}{\partial(\lambda^{(q)}_j,\tau)}|\}$
		
		\item[] \textit{Death step:}
		Two components $\lambda^{(q)}_{j_1}$ and $\lambda^{(q)}_{j_2}$ are merged to a single component $\lambda^{(q)}_j=(\lambda^{(q)}_{j_1}-\tau+\lambda^{(q)}_{j_2}+\tau)/2$ with 
		acceptance probability $\alpha=\text{min}\{1,\frac{p(\Theta^{(q)},\lambda^{q}_j,d(j))}{p(\Theta^{(q)},\lambda^{(q)}_{j_1},\lambda^{(q)}_{j_2},d(j_1,j_2))}\times \frac{\pi}{2} \times |\frac{\partial(\lambda^{(q)}_j,\tau)}{\partial(\lambda^{(q)}_{j_1},\lambda^{(q)}_{j_2})}|\}$
	\end{itemize}
	\item[] \textit{Step 4}. Step 1, 2 and 3 are repeated as many times as required to ensure convergence and the value of $m$ is determined by which stage is visited maximum number of times, maximum aposteriori estimate (MAP) and posterior estimate of ${\bf Z}$ is obtained by averaging over those stages.
\end{itemize}
%%%%%%%%%%%%%%%%%%%%%%%%%%%%%%%%%%%%%%%%%%%%%%%%%%%%%%%%%%%%%%%%%%%%%%%%%%%%%%%%%%%%%%%%%%%

\section{Simulations and Data Analyses}\label{simulation and data}
In this section, we compare numerical performance of our Bayesian Variable Clustering (BVC) algorithm with a recent method based on COD (Covariance Difference) of \cite{bunea2020model}, Partitioning Around Medoids (PAM) algorithm which minimizes the Manhattan distance of the data points to the medoids \citep{kaufman2009finding} and the classical or standard K-means clustering algorithm. The performance criterion we use is the proportion of true recovery which is defined for a $k$ variable as 
\begin{equation}  \label{equation_performance}
	\frac{\# \text{ of variables in the true clusters}}{k}.
\end{equation}
COD and PAM have been implemented using the \textsf{R} packages \pkg{cord} \citep{cord} and \pkg{class} \citep{class} available via CRAN and K-means algorithm has been implemented on the transposed data matrix using \textit{kmeans()} function in \textsf{R} software \citep{R}. It is instructive to note that quantity in (\ref{equation_performance}) takes value in the interval [0,1]. As the value approaches to 1, the recovery becomes better.

\subsection{Simulation study} \label{bunea_simulation}
We start with an $m\times m$ matrix $C=B^{\top}B$ where the entries of the random $(m-1)\times m$ matrix $B$ take values $-1,0,1$ with probabilities $0.5\times m^{-1/2}$, $1-m^{-1/2}$ and $0.5\times m^{-1/2}$, respectively, with $m$ being the number of clusters. Next, we consider a balanced case with each group (cluster) of size $k/m$. Let $A=(a_{ij})$ be the $k\times m$ membership matrix with $a_{ij}=1$ if the $i$-th variable belongs to $C_j$ and 0 otherwise. Finally, consider the covariance matrix $\Sigma = ACA^{\top}+\Gamma$ where $\Gamma$ is a diagonal matrix whose entries are random permutations of $\{0.5,0.5+1.5/(k-1),...,2\}$ and  the corresponding correlation matrix $R$. With $k=200, m=4$, we simulate $n$ independent observations from a multivariate normal distribution with mean zero vector and covariance matrix $R$, where we vary $ n $ in {100, 300, 600, 900} to compare BVC, COD and K-means algorithms with respect to cluster recovery criterion in (12). The results presented in Figure 1 shows the superior performance of BVC relative to COD, PAM and K-means as the values stay closer to 1.

\begin{figure}[h]
	\begin{center}
		\includegraphics[width=4in, height = 3in]{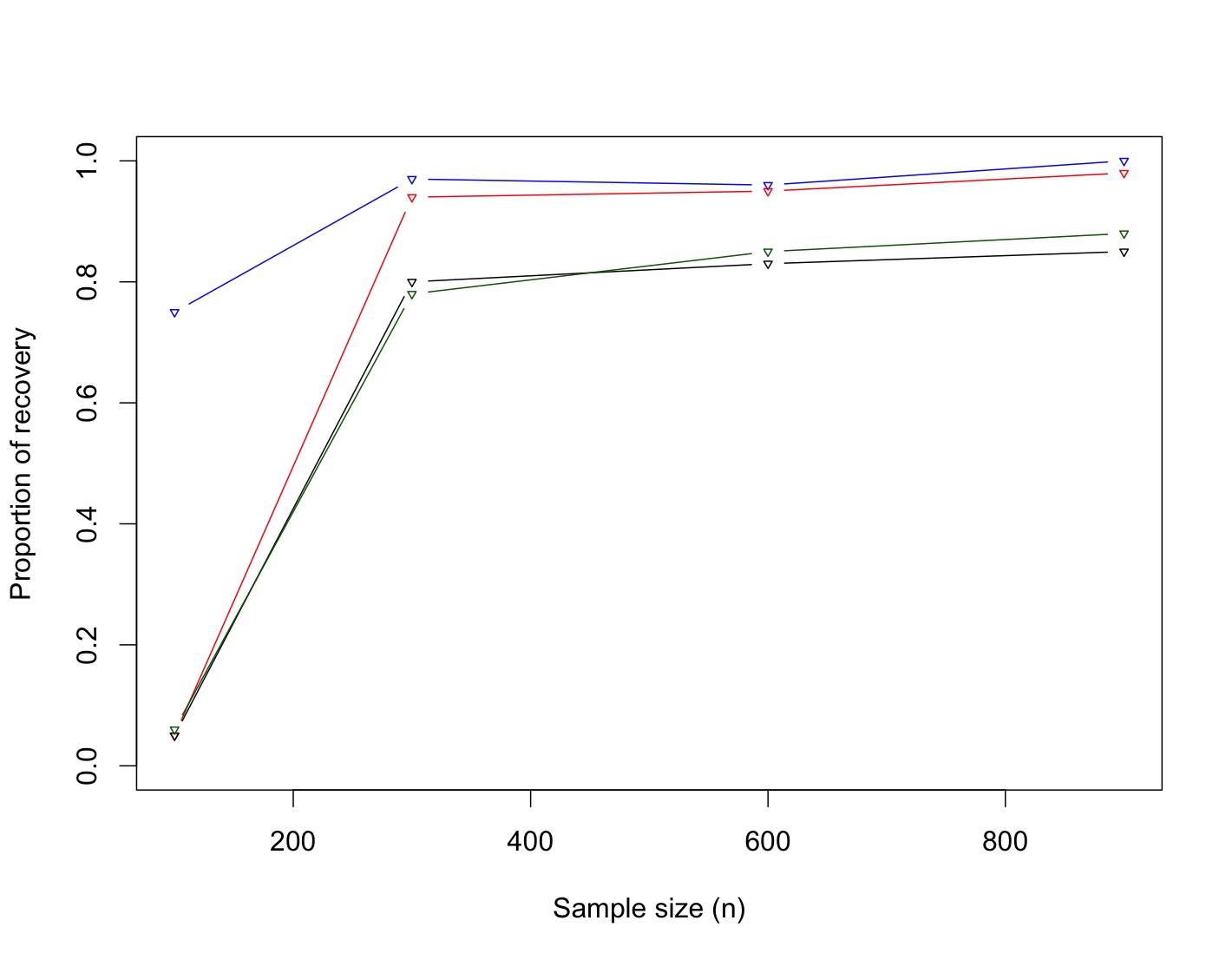}
		\caption{Comparing BVC(blue), COD(red), PAM(green) and K-means(black) for simulation study in \cref{bunea_simulation} }
	\end{center}
\end{figure}

%%%%%%%%%%%%%%%%%%%%%%%%%%%%%%%%%%%%%%%%%%%%%%%%%%%%%%%%%%%%%%%%%%
%\subsection{Pathways in Protein Expression Data}
\subsection{Application of Protein clustering to Hereditary Breast Cancer Data}
Breast cancer is one of the most common cancers with a massive number of cases reported. For instance, in 2018, more than 268,000 Americans were estimated to have been diagnosed and 41,000 were estimated to have died from breast cancer related tumors \citep{bray2018global}. The Cancer Genome Atlas: TCGA is the largest available cancer data consortium consisting of parallel mRNA expressions, DNA copy number, methylation expressions, protein expressions, along with clinical variables such as survival or the tumor stages for a total of 33 types of tumors. Among them we consider the information of 222 breast tumor samples; we consider 27 different proteins 4 different pathways (see \cref{pathway_info}). Different subsets of this data has been used in \cite{maity2020bayesian, maity2020bioinformatics} for different purposes.

Applying our BVC algorithm to this data, the MAP estimate of the number of clusters is 4, which is consistent with the number of pathways. % ($54\%$), 3 ($18\%$), 2 ($26\%$), 5 ($2\%$). Therefore, we conclude that the number of clusters to be 4. 
However, applying the  COD algorithm in \cite{bunea2020model} the estimated number of clusters is 23, much larger than the known value of 4. In \cref{cluster_protein}, we provide the assignments of various proteins in different clusters. Additionally, for the sake of comparison we have also applied the K-means algorithm to this data for $k=4, 23$, respectively, with results reported in the \cref{cluster_protein}. The results suggest that our Bayesian variable clustering (BVC) is performing better to cluster the proteins with respect to pathways. Only misclassified proteins are MAPK\_pT201\_Y204, CD31, CD49b, CDK1. The COD algorithm reports that number of clusters is 23 which appears to be too high since the number of proteins is 27. A possible reason could be this algorithm is meant for high dimensional clustering, it fails to detect clustering configuration in small dimensional cases. Comparisons with standard K-means and PAM algorithm also reveal that these two methods result in more disagreement of the cluster configuration of the proteins according to the pathway information. This apart, K-means and PAM algorithm disagree among themselves, e.g., ER-alpha, JNK\_pT183\_pT185 etc.(Table 2). We have also performed hierarchical clustering on this data with various linkages . The results are presented in Figure 2.

\begin{figure}[!ht]\label{hierarchical_clustering}
	\includegraphics[width= 4.8in]{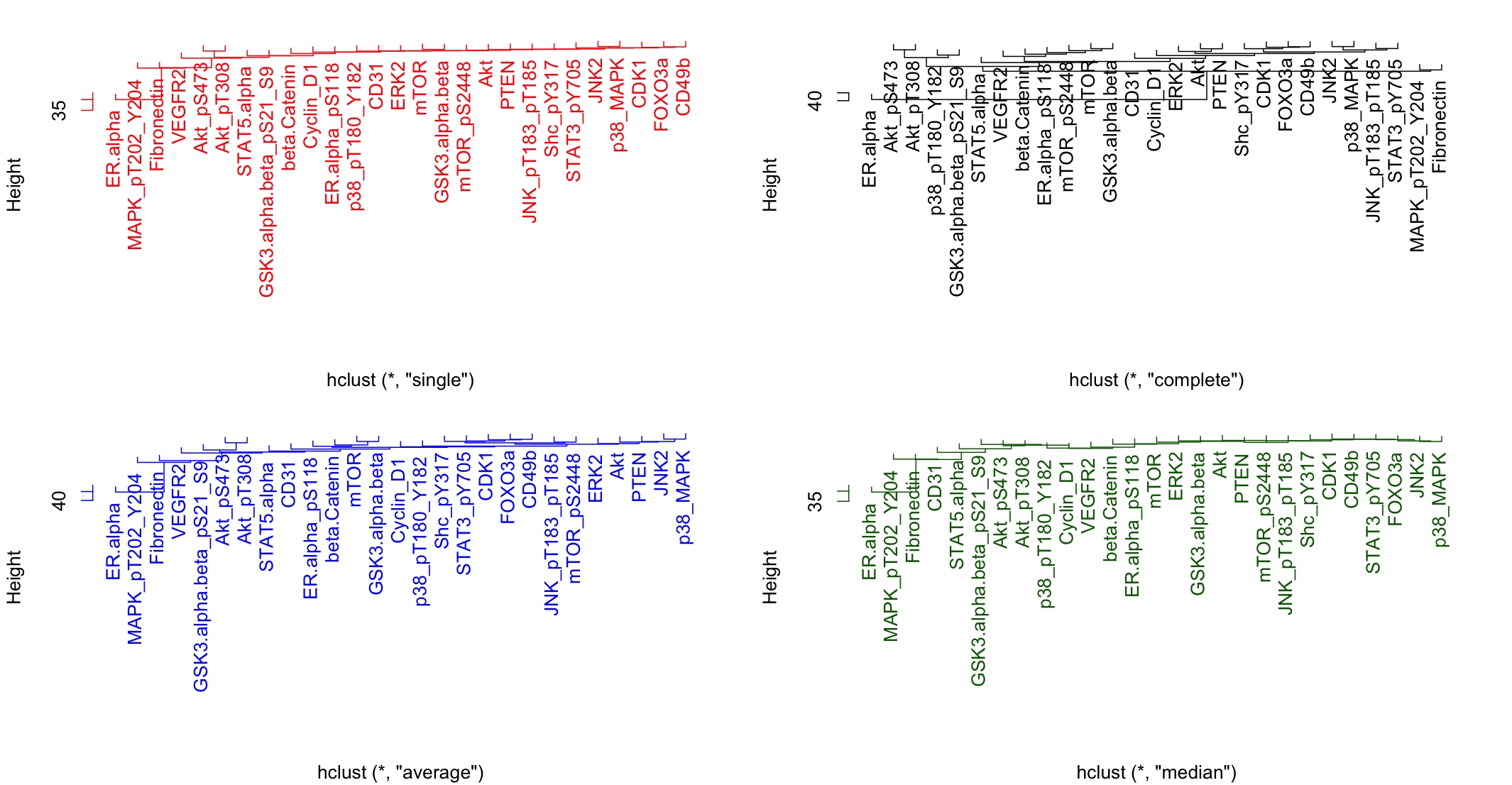}
	\caption{Hierarchical clustering for the protein expression data with four different linkages.}
\end{figure}

\section{Discussion}\label{conclusion}

We have proposed a correlation matrix based Bayesian clustering technique to recover the protein signaling pathways. This method uses angular reparameterization of correlation matrix with the specification of wrapped exponential prior on the angle parameters. Nonetheless, as an alternative, one can use any truncated circular distribution as prior for pivotal angles, for example von-Mises distribution. However, this particular choice produces a mean which has no closed form and as a result our proposed method can not be carried out for a posterior analysis.

A large amount of recent interest is being channelized to analyze the proteomics data directly because direct analysis of proteins has potential to uncover the cell functional characteristics. When it is of interest to find the group of proteins having similar functions which may be evident via their expression measurements then our proposed method can be used to bridge that gap. As mentioned earlier, our method is particularly useful when the number of clusters is not known and hence is learned via the posterior MCMC, which is often the case for the real data where the determining the number of clusters is itself a tedious job.

\section*{Acknowledgment}

The research reported in this paper was supported by grant from the National Institutes of Health (R01CA194391).

\appendix

\section{Appendix}\label{appendix} 

\subsection{Proof of \cref{Prop_block_diagonal}}\label{block_diagonal_1}
\begin{proof}
	The proof uses the Cholesky decomposition of $R$ and the fact that Cholesky factor of a block diagonal matrix is also block diagonal and vice versa.
	
	Hence, the lower triangular Cholesky factor $R$ has the form 
	
	$B=\text{block diag}(B_1,B_2,...,B_m)$, where $B_i$ is an upper triangular Cholesky factor of $R_i$.
	
	The proof will be complete if we can show that Cholesky factor $B$ of a compound symmetric correlation matrix $R$ can be written in terms of only one angle. 
	From the relationship between angles and Cholesky factor, it follows that $\text{cos}(\theta_{i1})=b_{i1}=r$ for $j=2,3,...,k$. Thus $\theta_{i1}$'s are all equal to a common $\theta$. 
	For any $i>j$, the proof follows by induction. For $i>j$, $r_{ij}=\sum_{l=1}^{j-1}b_{il}b_{jl}+b_{jj}b_{ij}$. By induction hypothesis, all the preceding angles and Cholesky factors are functions of $r$. Thus, the first term is a function of $r$. For the second term, we note $b_{jj}=\prod_{l=1}^{j-1}\text{sin}(\theta_{jl})$, which involves all the preceding angles and thus a function of $r$ and $b_{ij}=\text{cos}(\theta_{ij})\prod_{l=1}^{j-1}\text{sin}(\theta_{jl})$. Thus it follows that $\text{cos}(\theta_{ij})$ is a function of $r$ and $\theta_{ij}$ is a function of $\theta$.
	
\end{proof}

\subsection{Proof of \cref{cluster_separation}}\label{block_diagonal_2}
\begin{proof}
	First consider $|\theta_1-\theta_2|=\delta$.
	Note that $|r_1-r_2|=\vert \int_{\theta_1}^{\theta_2}\text{sin }x dx\vert$. Also it is clear that $|r_1-r_2|$ is an increasing function of $|\theta_1-\theta_2|$, since sin is positive in $[0,\pi)$. Now since sin is increasing in $[0,\pi/2]$ and decreasing in $(\pi/2,\pi)$, $|r_1-r_2|$ will take minimum value for $|\theta_1-\theta_2|=\delta$ when $\theta_1=0, \theta_2=\delta$. Thus the minimum value of $|r_1-r_2|$ is $|1-\text{cos}\delta|$.
\end{proof}

\subsection{Pathway Information}\label{pathway_info}
See Table 1 for pathway information of the proteins.
\begin{table}[h]
	\caption{Pathway Protein List}
	\centering
	\begin{tabular}{l|l|l|l}
		\hline
		\hline
		MAP kinase & PI3K/AKT/mTOR & JAK-STAT & Wnt  \\
		\hline
		\hline
		ER-alpha          & AKT          & SHC\_pY317   & CD31 \\
		ER-alpha\_pS118   & AKT\_pS473   & STAT3\_pY705 & CD49b  \\
		ERK2              & AKT\_pT308   & STAT5-alpha  & CDK1 \\
		JNK2              & FOXO3a       &              & Cyclin\_D1 \\
		JNK\_pT183\_pT185 & PTEN         &              & Fibronectin  \\      
		MAPK\_pT202\_Y204 & mTOR         &              & GSK3-alpha-beta \\
		p38\_MAPK & mTOR\_pS2448 &              & GSK3-alpha-beta\_pS21\_S9 \\
		p38\_pT180\_Y182         &              &              & VEGFR2 \\
		&              &              & beta-Catenin  \\
		\hline
		\hline
	\end{tabular}
\end{table}

\subsection{Cluster Assignments of proteins}\label{cluster_protein}  
Table 2 presents the cluster assignments of proteins by BVC, COD, PAM and K-means algorithms.
\begin{table}[h]
	\caption{Cluster comparisons by BVC, COD and K-means}
	\begin{center}
		\begin{tabular}{ c|c|c|c|c|c} 
			\hline \hline
			Protein  &  BVC &  COD & K-means ($m$=4)& PAM ($m$=4) & K-means (k=23)\\
			\hline \hline
			ER-alpha  &  $C_1$ &  $C_{1}$ & $C_3$ & $C_1$ &$C_{2}$\\
			ER-alpha\_pS118  & $C_1$ &  $C_{1}$ & $C_1$ & $C_2$ &$C_{7}$\\
			ERK2  &  $C_1$ &  $C_{2}$ & $C_1$ & $C_2$ &$C_{15}$\\
			JNK2  & $C_1$ &  $C_{5}$ & $C_1$ & $C_2$ &$C_{10}$\\
			JNK\_pT183\_pT185  &  $C_1$ &  $C_{6}$ & $C_4$ & $C_2$& $C_{8}$\\
			MAPK\_pT202\_Y204  &  $C_3$ &  $C_{7}$ & $C_2$& $C_3$ &$C_{112}$\\
			p38\_MAPK  &  $C_1$ & $C_{8}$ & $C_4$ & $C_2$&$C_{9}$\\
			p38\_pT180\_Y182  &  $C_1$ &  $C_{9}$ & $C_4$ &$C_2$&$C_{22}$\\
			\hline \hline
			AKT  &  $C_2$ &  $C_{10}$ & $C_1$ & $C_2$&$C_{11}$\\
			AKT\_pS473  &  $C_2$ &  $C_{11}$ & $C_2$ & $C_4$&$C_{1}$\\
			AKT\_pT308  &  $C_2$ &  $C_{12}$ & $C_2$ &$C_4$ &$C_{1}$\\
			FOXO3a  &  $C_2$ &  $C_{13}$ & $C_4$ & $C_2$&$C_{5}$\\
			PTEN  & $C_2$ &  $C_{14}$ & $C_1$ & $C_2$&$C_{13}$\\
			mTOR  &  $C_2$ &  $C_{15}$ & $C_1$ & $C_2$ &$C_{21}$\\
			mTOR\_pS2448  &  $C_2$ &  $C_{16}$ & $C_1$ & $C_2$ &$C_{20}$\\
			\hline \hline
			SHC\_pY317  &  $C_3$ &  $C_{17}$ & $C_4$ & $C_2$&$C_{17}$\\
			STAT3\_pY705  &  $C_3$ &  $C_{18}$ & $C_4$ &$C_2$ & $C_{23}$\\
			STAT5-alpha  &  $C_3$ &  $C_{2}$ & $C_1$ & $C_2$&$C_{18}$\\
			\hline \hline
			CD31  &  $C_2$ &  $C_{3}$ & $C_4$ & $C_2$&$C_{6}$\\
			CD49b  &  $C_2$ &  $C_{19}$ & $C_4$ & $C_2$&$C_{5}$\\
			CDK1  &  $C_2$ &  $C_{3}$ & $C_4$ & $C_2$&$C_{5}$\\
			Cyclin\_D1  &  $C_4$ &  $C_{20}$ & $C_4$ & $C_2$& $C_{16}$\\
			Fibronectin  &  $C_4$ &  $C_{21}$ & $C_4$ & $C_2$ &$C_{4}$\\
			GSK3-alpha-beta &  $C_4$ &  $C_{4}$ & $C_1$ & $C_2$& $C_{21}$\\
			GSK3-alpha-beta\_pS21\_S9  &  $C_4$ &  $C_{22}$ & $C_2$ & $C_4$&$C_{19}$\\
			VEGFR2  &  $C_4$ &  $C_{23}$ & $C_1$ & $C_2$ &$C_{3}$\\
			beta-Catenin & $C_4$& $C_4$ & $C_1$&$C_2$ & $C_{14}$\\
			\hline \hline
		\end{tabular}
	\end{center}
\end{table}

%\section*{References}

\bibliography{clustering}

\end{document}